\DeclareFontFamily{U}{mathb}{\hyphenchar\font45} 
\DeclareFontShape{U}{mathb}{m}{n}{
<-6> mathb5 <6-7> mathb6 <7-8> mathb7
<8-9> mathb8 <9-10> mathb9
<10-12> mathb10 <12-> mathb12
}{}
\DeclareSymbolFont{mathb}{U}{mathb}{m}{n}
\DeclareMathSymbol{\llcurly}{\mathrel}{mathb}{"CE}
\DeclareMathSymbol{\ggcurly}{\mathrel}{mathb}{"CF}
\newcommand{\thermomaj}{\succ_{\rm th}}
\newcommand{\cthermomaj}{\ggcurly_{\rm th}}
\tikzset{
    level/.style = {
        ultra thick,
        black,
    },
    connect/.style = {
        dashed,
        red
    },
    notice/.style = {
        draw,
        rectangle callout,
        callout relative pointer={#1}
    },
    label/.style = {
        text width=1cm
    }
}
\theoremstyle{definition}
\newtheorem{definition}{Definition}
\newtheorem{theorem}{Theorem}
\newtheorem{lemma}{Lemma}
\begin{document}

\title{Capacity of non-Markovianity to boost the efficiency of molecular switches}
\author{Giovanni Spaventa}
\email{giovanni.spaventa@uni-ulm.de}
\affiliation{Institute of Theoretical Physics \& IQST, Ulm University, Albert-Einstein-Allee 11 89081, Ulm, Germany}
\author{Susana F. Huelga}
\email{susana.huelga@uni-ulm.de}
\affiliation{Institute of Theoretical Physics \& IQST, Ulm University, Albert-Einstein-Allee 11 89081, Ulm, Germany}
\author{Martin B. Plenio}
\email{martin.plenio@uni-ulm.de}
\affiliation{Institute of Theoretical Physics \& IQST, Ulm University, Albert-Einstein-Allee 11 89081, Ulm, Germany}
\date{\today}

\begin{abstract}
Quantum resource theory formulations of thermodynamics offer a versatile tool for the study of fundamental limitations to the efficiency of physical processes, independently of the microscopic details governing their dynamics. Despite the ubiquitous presence of non-Markovian dynamics in open quantum systems at the nanoscale, rigorous proofs of their beneficial effects on the efficiency of quantum dynamical processes at the molecular level have not been reported yet.
Here we combine the quantum resource theory of athermality with concepts from the theory of divisibility classes of quantum channels, to prove that memory effects can increase the efficiency of photoisomerization to levels that are not achievable under a purely thermal Markovian (i.e. memoryless) evolution. This provides rigorous evidence that memory effects can provide a resource in quantum thermodynamics at the nanoscale.
\end{abstract}

\maketitle

{\bf\emph{Introduction --}} Photoinduced switching in molecular systems is at the basis of many fundamental processes in living organisms. Relevant examples include light activated processes in the primary steps of photosynthesis in plants, algae and bacteria \cite{croce2018light} and the photoisomerization of retinal in the cells of human's retina \cite{schulten1978biomagnetic}. Furthermore, molecular switches can be artificially engineered and employed in technological applications, such as the storage of solar energy, nanorobotics and optical data storage \cite{dattler2019design}.
This class of reactions involve sub-picosecond time scales and are characterized by remarkably high quantum yields and strong specificity \cite{nogly2018retinal,seidner1994microscopic,seidner1995nonperturbative,hahn2000quantum,hahn2002ultrafast}. Here, the external irradiation can
be thought of as a high-temperature reservoir, with the vibrations of the scaffold of the pigment-protein complex playing the role 
of a low-temperature reservoir for excess energy. This framework suggests a \textit{thermal engine} representation of the process and a thermodynamic analysis 
in which to determine stringent measures of efficiency as well as fundamental limits thereof. Recent work, for instance, compared the output power of model pigment networks subject to purely thermal relaxation with models that include a coherent coupling of the network to the protein modes \cite{killoran2015enhancing}. This type of analysis allows to draw quantitative conclusions concerning the properties that enhance efficiency, exciton-vibrational coherence in that case, but at the same time is very much constrained by the specific dynamical details of the model. Therefore, it would be desirable to perform such an analysis at a higher level of abstraction, in the same spirit as the study of classical thermal machines using the laws of macroscopic thermodynamics. This is precisely what resource theories \cite{coecke2016mathematical,chitambar2019quantum}, developed in the context of quantum information, can facilitate. In particular, the resource theory of thermodynamics at the nanoscale \cite{ruch1976principle,ruch1978mixing,janzing2000thermodynamic,horodecki2013fundamental,goold2016role,lostaglio2019introductory,ng2018resource} has been applied very recently to determine the optimal photoisomerization yield of a molecular switch \cite{halpern2020fundamental}. Here, we use the same model system to first obtain an analytical bound that encompasses previous numerical results and then focus on extending the resource theoretic formalism from general thermal operations, characteristic of an input-output formalism at specific moments in time, to continuous processes which allow us to connect the concept of (non)-Markovianity with thermal operations. Building on this, we then proceed to demonstrate the existence of a finite gap between the optimal yields achievable under general thermal operations and that achievable under Markovian operations alone.
This allows us to rigorously quantify the yield gain that is due to non-Markovian effects in the system-environment interaction \cite{rivas2014quantum,breuer2016colloquium,devega2017dynamics,li2018concepts,milz2020quantum}. This result is of particular importance for the dynamics of bio-molecular complexes for which the presence of non-Markovianity, due to the interplay of the electronic quantum dynamics with highly structured environmental spectral densities and long-lived vibrational modes, is well-established \cite{huelga2013vibrations}.

{\bf\emph{Thermodynamics as a resource theory --}} Quantum resource theories (QRTs) provide a theoretical framework in which a set of states/operations are considered free, and any state/operation outside the free set can then become a resource, in the sense of facilitating a task inaccessible to the free set. Non-free states are therefore deployed at a cost, but in exchange they assist processes that would be otherwise impossible or only attainable with a smaller fidelity. Historically, the first example of a resource theory is the theory of bipartite entanglement \cite{plenio2014introduction,horodecki2009quantum}, where the restriction to local operations and classical communication (LOCC) promotes entanglement to a resource, while separable states remain freely accessible. Analogously, quantum thermodynamics can be formulated as a resource theory in which both free states and operations are \emph{thermal} \cite{ruch1976principle,ruch1978mixing,janzing2000thermodynamic,horodecki2013fundamental,goold2016role,lostaglio2019introductory,ng2018resource}.  Specifically, the resource theory of \emph{athermality} is constructed as follows. Given a system $S$ with Hamiltonian $H_S$, the following three elementary operations are allowed: (i) The system can be brought into contact with a thermal bath $B$, that is, we can freely deploy Gibbs states $\tau = e^{-\beta H_B}/Z$ at inverse temperature $\beta$. (ii) We can perform any global unitary transformation $U$ on $S+B$, as long as it is energy preserving, i.e., $ \big[ U,H_S+H_B \big]=0$ . (iii) We are allowed to trace out subsystems, and in particular the entire bath $B$.
As a result, the action of \emph{thermal operations} (TO) on a density operator $\rho_S$ is then defined as
\begin{equation} \rho_S \,\xrightarrow{\,\,\rm{TO}\,\,}\, \Tr_{B}\, \big[ U\,\rho_S\otimes\tau\, U^\dagger \big]\,. \end{equation}
Note that thermal operations preserve the Gibbs state of the system $S$, and furthermore they obey \textit{time-translation covariance} (also  called \textit{phase-covariance} or $U(1)$-covariance), i.e. they commute with the free unitary evolution of the system: $\mathcal{T}\circ\mathcal{U}_t=\mathcal{U}_t\circ\mathcal{T}$ for any $\mathcal{T}\in\mathsf{TO}$.

The action of thermal operations on quasiclassical states (states that are diagonal in the energy eigenbasis) can be fully characterised as that of matrices mapping the population vector of an initial state to the population vector of a final state, given the phase covariance of the operation. More concretely, thermal operations act on population vectors as Gibbs-stochastic matrices, i.e., stochastic matrices that preserve the Gibbs state. We will denote by $\mathsf{GS}_n$ the semigroup (or, more precisely, the monoid \footnote{For fixed $H$, the set $\mathsf{GS}_n(\beta H)$ is closed under matrix multiplication, therefore it is a semigroup. Furthermore, the identity matrix trivially belongs to this set, making $\mathsf{GS}_n(\beta H)$ a semigroup with an identity element, i.e. a monoid.}) of $n\times n$ stochastic matrices that preserve the diagonal of the Gibbs-state $e^{-\beta H}/Z$. Then, given two quasiclassical states $\rho$ and $\sigma$, the first can be mapped into the second by thermal operations ($\rho\, \xrightarrow{\mathsf{TO}}\, \sigma$) if and only if there exists a Gibbs-stochastic matrix mapping the diagonal $p$ of $\rho$ to the diagonal $q$ of $\sigma$:
\begin{equation} \rho\, \xrightarrow{\mathsf{TO}}\, \sigma\,\iff\, \exists\, G\in \mathsf{GS}_n\, \text{  s.t.  } q = Gp\,.  \end{equation}
The problem of assessing when two particular states are connected by a Gibbs-stochastic matrix is then solved by \textit{thermomajorization} \cite{ruch1976principle,ruch1978mixing,horodecki2013fundamental}, in a way analogous to how majorization characterizes state convertibility under LOCC in the resource theory of entanglement. 
In particular, one associates to a density matrix $\rho$ a curve $f_\rho(x)$ (called \textit{thermomajorization curve}) and, given two density matrices $\rho$ and $\sigma$, it is said that $\rho$ thermomajorizes $\sigma$, i.e. $\rho\thermomaj \sigma$, if
$ f_\rho (x) \geq f_\sigma (x) \,, \forall x$.
Then, given two quasiclassical states $\rho$ and $\sigma$:
\begin{equation} \rho\, \xrightarrow{\mathsf{TO}}\, \sigma\,\iff\, \rho \thermomaj \sigma\,.  \end{equation}

{\bf\emph{Photoisomerization --}} We now consider the problem of modeling photoisomerization in the resource theory of athermality.  We will use the same model system as in \cite{halpern2020fundamental}, where an angular coordinate $\varphi$ between two heavy chemical groups parametrizes the relative rotation of two molecular components around a double bond. Fig.\ref{fig:energy_landscape} displays a typical energy landscape for these systems, where the two eigenvalues $\mathcal{E}_{0,1}(\varphi)$ can be obtained from a class of Hamiltonians commonly used in the study of photoisomerization (see \cite{seidner1994microscopic,seidner1995nonperturbative,hahn2000quantum,hahn2002ultrafast}). It is important to stress, however, that the results presented in this work are completely independent of the specific form chosen for the Hamiltonian (and therefore the energy landscape of the system), as the only parameters that are crucial to our analysis are the energies of the levels that are occupied at the initial and final times and their corresponding populations.
The ground state energies for $\phi=0$ and $\phi=\pi$ satisfy $\mathcal{E}_{0}(\varphi=0)\le\mathcal{E}_{0}(\varphi=\pi)$
in our analysis, and are separated by an energy barrier.
The molecule begins in a thermal state of configuration $\varphi=0$. Then, following photo-excitation, the molecule can isomerize (or switch configuration) while relaxing in contact with its environment, for example via a dissipative Landau-Zener transition (the results are however independent of the intermediate mechanisms governing the process, which could for instance involve a conical intersection \cite{polli2010conical}). The probability of switching configuration during relaxation is called \textit{photoisomerization yield}.

\begin{figure}
\centering\includegraphics[width=0.48\textwidth]{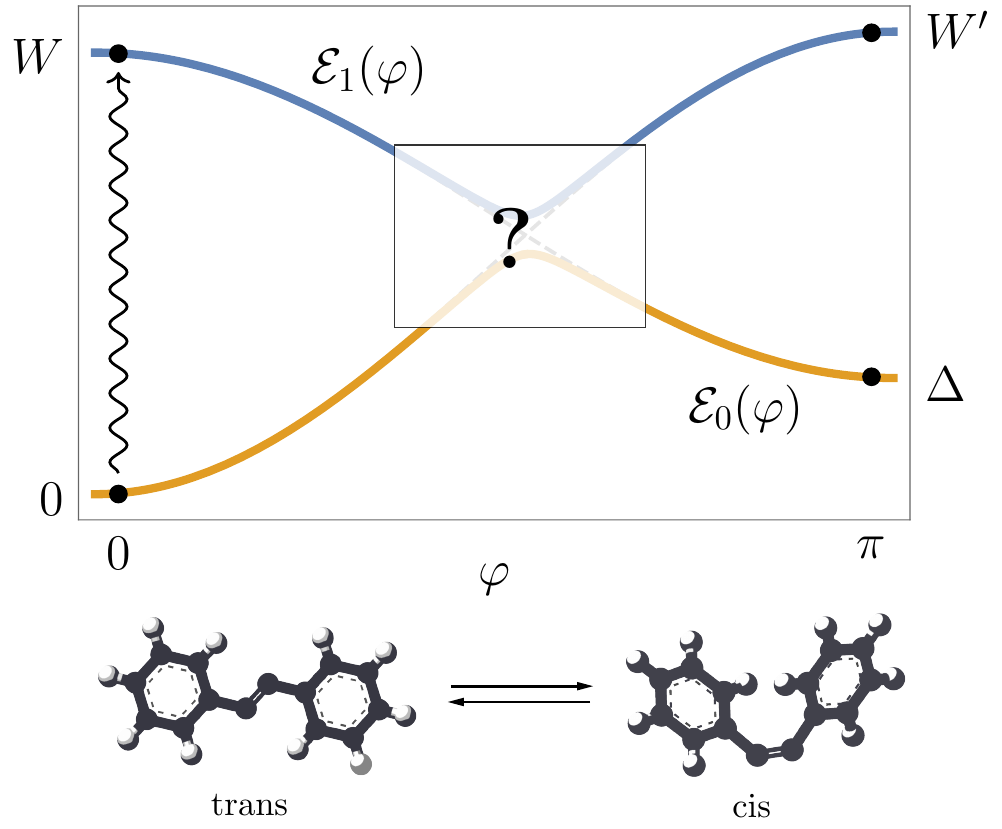}
\caption{Energy landscape for a typical photoisomer. The system starts in the electronic ground state at $\varphi=0$, and it is photoexcited (wavy arrow) by a light source. It can then relax to the cis ground state $\mathcal{E}(\varphi=\pi)$, while in contact with its environment. Our results are independent of the actual intermediate dynamics of the process. The advantage of a resource theoretic approach, is precisely that of providing general bounds, that are valid regardless of the complicated microscopic details governing the evolution of the system. The four dots represent the states that are considered in the four-levels model of \cite{halpern2020fundamental}.}
    \label{fig:energy_landscape}
\end{figure}

\noindent The molecule starts in thermal equilibrium with its environment at an inverse temperature $\beta$, at an angular configuration $\varphi=0$. The first step is that of \textit{photoexcitation}, during which a light source excites the electronic state of the molecule at the fixed angular coordinate $\varphi =0$. As a consequence, the molecule ends up in a new electronic state $\rho_i$. Two molecular components can then rotate relative to each other while relaxing in contact with the environment, and the molecule ends up in a state $\rho_f$ which will have some weight on the \textit{cis} configuration $\varphi=\pi$. Finally, the molecule thermalizes again.\\ The \textit{photoisomerization yield} $\gamma$ is then defined as the weight of the post-rotation state on the cis electronic ground state, that is $\gamma = \bra{\mathcal{E}_0(\pi)} \rho_f \ket{\mathcal{E}_0(\pi)}$.\\
The rotation step can then be thought as a mapping $\rho_i\mapsto\rho_f$ between the post-excitation and the post-rotation electronic states. In the absence of external sources of work and/or coherence, and under the assumption that the total energy of system and environment is conserved, the mapping above can be described as a thermal operation. Indeed, thermal operations are precisely those operations that do not require the consumption of any initial source of work and/or coherence. In real world scenarios, the actual process could depart from thermal operations, for example as a consequence of total energy of system and bath not being exactly conserved, or the presence of an external clock, which would break time-translation covariance. The results presented in this work could then be used as a witness of athermality. In other words, a violation of the bounds presented here, would rigorously prove that the mapping $\rho_i\mapsto\rho_f$ is not a thermal operation, which in turn would mean that the specific environment under analysis does not merely act as a passive thermal reservoir, but as a battery (a source of work) and/or as a clock (a source of asymmetry).

Photoexcitation changes the electronic state of the photoisomer to a state $\rho_i$ which in general will not be diagonal in the energy eigenbasis, i.e. $[\rho,H]\neq0$. One would then naturally expect coherence in the energy eigenbasis to affect the yield. However, the time-translation covariance of thermal operations guarantees that populations evolve independently from coherences. Therefore, if we are only interested in the yield, we can disregard coherences completely. In fact, the  yield is nothing more than the population of level $\ket{\mathcal{E}_0(\pi)}$, and as such, it cannot possibly be affected by the presence of coherence in the energy eigenbasis, allowing us to focus on quasiclassical states only. With this in mind, the transformation $\rho_i\xrightarrow{}\rho_f$ is possible via thermal operations if and only if the initial state thermomajorizes the final state. This means that not all values of the yield are allowed by the constraints of thermal operations, and that we could find an upper bound to the yield by solving for the largest value of $\gamma$ such that $\rho_i \thermomaj \rho_f$.

A numerical solution for this problem is provided in \cite{halpern2020fundamental}, where the authors focus on states with $\varphi=0$ and $\varphi=\pi$, effectively reducing the photoisomer to a four-levels system $\{\ket{\mathcal{E}_0(\varphi)},\ket{\mathcal{E}_1(\varphi)}\}_{\varphi=0,\pi}$.\\ This simplification is made possible by the fact that the thermomajorization conditions only involve the initial and final states, together with the assumption on the initial state of the molecule having negligible overlap with $\varphi\neq0$ configurations. A more sophisticated model, that takes into account the finite width of the initial distribution around $\varphi=0$, would require the addition of levels near the stable configurations, and represents an interesting extension to the analysis.

{\bf\emph{Optimal yield for a single photoisomer --}}
We will now obtain an analytical solution to the maximum allowed yield under general thermal operations.
For that, it suffices to model the photoisomer as a three-levels system $\{\ket{0},\ket{\Delta},\ket{W}\}$ (see Appendix \ref{appendix_GS4}). These three levels have energies $0,\Delta,W$, respectively, and they arise from the full spectrum $\mathcal{E}_{0,1}(\varphi)$ by focusing on  $\varphi=0,\pi$ only, and by neglecting the uppermost energy level $\ket{\mathcal{E}_1(\pi)}$. Given a population vector $(p_0,p_\Delta,p_W)$, the photoisomerization yield is then $\gamma(\rho) := \bra{\Delta} \rho \ket{\Delta}=p_\Delta$.

\begin{figure}[t]
\centering
\includegraphics[]{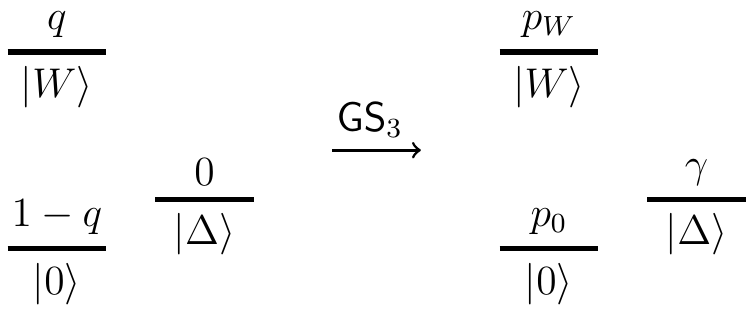}


\caption{Photoisomerization as a Gibbs-stochastic mapping between three-levels systems. The three states $\{\ket{0},\ket{\Delta},\ket{W}\}$ arise from focusing on the two angles $\varphi=0,\pi$ only, and ignoring the uppermost level $\mathcal{E}_1(\pi)$. The populations in the initial state are $(1-q,0,q)$, while the final state has the photoisomerization yield $\gamma$ as the new population for the level $\Delta$.}
\label{fig:three_levels}
\end{figure}
Let us consider an initial state of the system parametrized by the \textit{photoexcitation factor} $q\in[0,1]$:
\begin{equation}
    \rho_i \longmapsto\, \big(1-q,0,q\big)\,,
    \label{eq_state_param}
\end{equation}
our aim is to constrain the possible evolution in the populations of these three states under the action of thermal operations.
The situation is depicted in Fig.\ref{fig:three_levels}, where a Gibbs-stochastic matrix transforms the initial state $\rho_i$ of the photoisomer to a final state with yield $\gamma$.

If we denote by $\gamma(G)$ the yield produced by the matrix $G$, the optimal photoisomerization yield achievable under thermal operations is
\begin{equation}
    \gamma^*:=\sup_{G\in \mathsf{GS}_3} \gamma(G) \,.
\end{equation}
From now on, we will measure energies in units of $\beta^{-1}$, so as to simplify the notation.
By considering a $3\times 3$ stochastic matrix, and by imposing that the Gibbs state is preserved, we find that the general form of a matrix $G\in \mathsf{GS}_3$ is
\begin{equation}
    \begin{pmatrix} 1-g_1e^{-\Delta}-g_2e^{-W} & g_1 & g_2 \\ (1-g_3)e^{-\Delta} - g_4 e^{-W} & g_3 & g_4 \\ (g_2+g_4)e^{-W}{-}(1{-}g_1{-}g_3)e^{-\Delta} & 1{-}g_1{-}g_3 & 1{-}g_2{-}g_4 \end{pmatrix}
    \label{eq:generic_GS3}
\end{equation}
where $g_1,g_2,g_3,g_4\in[0,1]$. All these parameters have to obey a set of constraints ensuring that every element in the matrix above is non-negative.

By computing the action of such a matrix on the initial state $\rho_i$, we obtain the following expression for the photoisomerization yield $\gamma(G)$:
\begin{equation}
    \gamma(g_3,g_4) = \left[(1-q)e^{-\Delta}\right](1-g_3) + \left[ (1+e^{-W})q-e^{-W}\right]g_4\,.
\end{equation}
This is a linear function of $g_3$ and $g_4$, and it can be easily maximized, producing the \textbf{optimal \mbox{photoisomerization} yield}:
\begin{equation}
    \gamma^* = \begin{cases} q+(1-q)\big( e^{-\Delta} - e^{-W} \big) \quad &\text{if}\,\,q\geq\tilde{q}\,, \\
    (1-q)e^{-\Delta} \quad &\text{if}\,\,q<\tilde{q}\,,\end{cases}
\end{equation}
where we have defined $\tilde{q}=1/(1+e^W)$.\\

{\bf\emph{Markovianity and thermal operations --}}
Building on the previous section, we proceed with the demonstration that the optimal yield $\gamma^*$ cannot be achieved under the additional restriction of thermal Markovianity. This result demonstrates that, in the absence of additional resources, some non-Markovianity (and therefore some memory) is fundamentally required to optimize the process of photoisomerization.

\noindent The first problem we face is that of relating the property of Markovianity with the framework of thermal operations. In fact, since the latter are defined in a way that does not explicitly involve any form of dynamics, there is no immediate way of telling if a thermal operation represents interactions with a Markovian bath or not.

Consider a thermal operation $\mathcal{T}\in \mathsf{TO}$, mapping a state $\rho_i$ to a state $\rho_f$, i.e. $\rho_i \xrightarrow{\mathcal{T}}\rho_f$. If such a transformation has to be physically meaningful, in the sense that it originates from a real physical evolution of a system, there must exist an underlying continuous dynamics, represented by a completely positive trace preserving (CPTP) map $\mathcal{E}_{(t,0)}$, which reproduces $\mathcal{T}$ for some value $t$, in other words:
\begin{equation}
    \begin{split}
       \rho_i \xrightarrow{\mathcal{T}}\rho_f  \iff  \exists\,\, \mathcal{E}_{(t,0)}\,\,\text{CPTP}\,\,\text{s.t.}\,\, \rho_f = \mathcal{E}_{(t,0)}(\rho_i)\,.
\end{split}
\label{eq_markovTO}
\end{equation}
 Given a thermal operation $\mathcal{T}$, we want to characterize whether it possible to find an underlying continuous Markovian map of which $\mathcal{T}$ is a particular snapshot.  If not, we will say that $\mathcal{T}$ is inherently non-Markovian, in the sense that there are no Markovian maps that could possibly reproduce the effects of $\mathcal{T}$ on the system.\\
Here we will adopt a definition of Markovianity in terms of the CP-divisibility of the map \cite{rivas2010entanglement}. Namely, $\mathcal{E}_{(t,0)}$ will be called Markovian if it is CP-divisible, i.e. if for all $0\leq s \leq t$ it can be decomposed as
\begin{equation}
    \mathcal{E}_{(t,0)} = \mathcal{E}_{(t,s)}\circ\mathcal{E}_{(s,0)}\,,
\end{equation}
with $\mathcal{E}_{(t,s)}$ a completely positive map \cite{wolf2008dividing,rivas2010entanglement,chruscinski2011measures,buscemi2016equivalence}. 
\begin{definition}[Embeddable thermal operations]
A thermal operation $\mathcal{T}\in \mathsf{TO}$ is said to be \textit{time-independent Markovian} (or \textit{embeddable}) if there exists a Lindblad generator $\mathcal{L}$ such that $\mathcal{E}_{(s,0)}=\exp\left(\mathcal{L}s\right)$ defines a thermal operation $\forall\,s$ and $\mathcal{T}=\mathcal{E}_{(t,0)}$. The set of embeddable thermal operations on $n$-dimensional systems is denoted by $\mathsf{ETO}_n$.
\label{def_ETO}
\end{definition}
\begin{definition}[Markovian thermal operations]
A thermal operation $\mathcal{T}\in \mathsf{TO}$ is said to be \textit{Markovian} (or \textit{memoryless}) if there exists a continuous family of Lindblad generators $\{\mathcal{L}(s)\mid 0\leq s\leq t\}$ such that $\forall\,s>r \in[0,t]$ the time-ordered exponential $\mathcal{E}_{(s,r)}=\mathsf{T}\exp\left( \int_r^s \mathcal{L}(\tau)d\tau \right)$ defines a thermal operation  and $\mathcal{T}=\mathcal{E}_{(t,0)}$. The set of Markovian thermal operations on $n$-dimensional systems is denoted by $\mathsf{MTO}_n$.
\label{def_MTO}
\end{definition}

\noindent These definitions capture the difference between maps that can be generated by time-independent Lindblad generators, and the more general CP-divisible maps. In the case of a thermal operation $\mathcal{T}$ being Markovian, the memoryless property amounts to the possibility of decomposing $\mathcal{T}$ into products of infinitesimal thermal operations $\mathcal{T}(t+\epsilon,t)$. This is consistent with saying that the bath is restored to thermal equilibrium at every step $\epsilon$, in a continuous fashion, therefore effectively ruling out any memory effect. Time-translation covariance holds locally in time with this definition, and the evolution of populations and coherences is therefore decoupled.\\
It is important to stress that Markovian operations, according to Def. \ref{def_MTO}, are generally not embeddable. However, they can be approximated arbitrarily well by sequences of embeddable operations, each with their own generator, $\mathcal{T}_i=\exp(\mathcal{L}_i)$. In other words, the composition of two embeddable operations is in general not embeddable, but one can reproduce an arbitrary Markovian operation, according to Def.  \ref{def_MTO}, by composing embeddable ones, analogously to the case of infinitesimal divisible maps and their approximation by the composition of Lindbladian generators \cite{wolf2008dividing}.\\ An equivalent classification can be imposed on the Gibbs-stochastic matrices associated to thermal operations, by considering classical rate matrices $Q$ instead of Lindblad generators. In particular a Gibbs-stochastic matrix $G$ will be called \textit{embeddable} if it can be written as $G=e^Q$, and \textit{Markovian} if it can be approximated arbitrarily well by a sequence of embeddable Gibbs-stochastic matrices, defining the sets $\mathsf{EGS}_n$ and $\mathsf{MGS}_n$, respectively. Indeed, the former definition is related to the well-established \textit{embedding problem} in the theory of stochastic matrices \cite{johansen1974some,davies2010embeddable,baake2020notes}, as other authors \cite{wolf2008assessing,lostaglio2018elementary,lostaglio2021continuous,aguilar2020thermal} have also pointed out.

Now, at least for diagonal states, there is nothing more to thermal operations than their action on populations, i.e., their Gibbs-stochastic matrix, and thus a thermal operation will be (time-independent) Markovian if and only if its corresponding Gibbs-stochastic matrix is so. 
In the following section, we will compute the optimal yield that is achievable under $\mathsf{MTO}_3$ while we defer the computation of the maximum yield under the class $\mathsf{ETO}_3$ to Appendix \ref{appendix_opt}. We will show that memory is ultimately required to achieve the optimal yield $\gamma^*$.\\

{\bf\emph{Markovianity and photoisomerization --}}
Given the considered concept of Markovianity, we can formulate the problem of determining the optimal isomerization yield in terms of the Gibbs-stochastic matrices associated to the thermal operations, and ask whether a certain matrix in $\mathsf{GS}_3$ can be thought of as a snapshot of an underlying classical time-inhomogeneous Markov process.
In other words, we are asking whether the optimal yield $\gamma^*$ generated by the action of $\mathsf{GS}_3$ is strictly larger than the optimal yield $\gamma_{\mathcal{M}}$ generated by the action of $\mathsf{MGS}_3$, and whether the latter is strictly larger than the optimal yield $\gamma_\mathcal{E}$ generated by the action of $\mathsf{EGS}_3$ alone.

\noindent For that, we use the concept of \textit{continuous thermomajorization} recently introduced in \cite{lostaglio2021continuous}.
\begin{definition}[Continuous thermomajorization]
A state $\rho$ is said to continuously thermomajorize a state $\sigma$, denoted $\rho \cthermomaj \sigma$, if there exists a continuous family of states $\{r(s)\mid0\leq s \leq t\}$, with $r(0)=\rho$ and $r(t)=\sigma$, such that $\forall t'>t$ one has $r(t)\thermomaj r(t')$.
\label{def_cthermomaj}
\end{definition}
\noindent
In particular this yields
\begin{theorem}
Consider two quasiclassical states $\rho,\sigma$. Then, one has $\rho\xrightarrow[]{\mathsf{MTO}}\sigma$ if and only if $\rho\cthermomaj\sigma$.
\end{theorem}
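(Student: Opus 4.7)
The plan is to prove the two directions of the biconditional separately, after first reducing the problem to a statement about Gibbs-stochastic matrices acting on population vectors. Since the time-translation covariance of thermal operations is inherited locally in time by any $\mathcal{T}\in\mathsf{MTO}_n$, the action on a quasiclassical state is completely captured by the associated Gibbs-stochastic matrix. In particular, a Markovian thermal operation corresponds on populations to an element of $\mathsf{MGS}_n$, so the theorem is equivalent to: the population vector of $\sigma$ is reachable from that of $\rho$ by an element of $\mathsf{MGS}_n$ if and only if $\rho \cthermomaj \sigma$.

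The forward direction is essentially immediate from the definitions. Given $\rho\,\xrightarrow{\mathsf{MTO}}\,\sigma$, Def.~\ref{def_MTO} supplies a continuous family of Lindblad generators $\{\mathcal{L}(s)\}$ such that $\mathcal{E}_{(s_2,s_1)}=\mathsf{T}\exp(\int_{s_1}^{s_2}\mathcal{L}(\tau)d\tau)$ is a thermal operation for every $0\le s_1\le s_2\le t$ and $\mathcal{E}_{(t,0)}(\rho)=\sigma$. Setting $r(s):=\mathcal{E}_{(s,0)}(\rho)$ defines a continuous path with $r(0)=\rho$, $r(t)=\sigma$, and for any $s_1<s_2$ the composition law $\mathcal{E}_{(s_2,0)}=\mathcal{E}_{(s_2,s_1)}\circ\mathcal{E}_{(s_1,0)}$ yields $r(s_2)=\mathcal{E}_{(s_2,s_1)}(r(s_1))$. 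Since $\mathcal{E}_{(s_2,s_1)}$ is itself a thermal operation, the standard characterization $\rho\,\xrightarrow{\mathsf{TO}}\,\sigma \iff \rho\thermomaj\sigma$ implies $r(s_1)\thermomaj r(s_2)$, so the path witnesses $\rho\cthermomaj\sigma$.

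For the converse I would follow the route of Ref.~\cite{lostaglio2021continuous}: given a continuous thermomajorization path $\{r(s)\}_{s\in[0,t]}$, discretize the interval into $N$ slices and, at each step $s_i\to s_{i+1}$, realize the infinitesimal transition $r(s_i)\thermomaj r(s_{i+1})$ by an embeddable Gibbs-stochastic matrix $\exp(\epsilon_i Q_i)$, where $Q_i$ is a classical rate matrix that preserves the Gibbs state. Concatenating these embeddable steps and letting $N\to\infty$ produces a time-ordered exponential in $\mathsf{MGS}_n$ whose lift to the Hilbert-space level is the desired element of $\mathsf{MTO}_n$ mapping $\rho$ to $\sigma$.

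The main obstacle is the infinitesimal construction: one has to show that between any two neighboring states on a thermomajorization-monotone path the infinitesimal transition is, to leading order in $\epsilon$, generated by a Gibbs-preserving rate matrix $Q$. This is the crucial input supplied by the theory of continuous thermomajorization, where it is established by identifying the tangent directions to the thermomajorization polytope at a generic point with the action of a distinguished family of elementary Gibbs-preserving generators implementing pairwise detailed-balanced transitions between energy levels. Continuity of $r(s)$ together with compactness of $[0,t]$ then guarantee uniform convergence of the discrete concatenations to a Markovian thermal operation in the sense of Def.~\ref{def_MTO}, closing the converse.
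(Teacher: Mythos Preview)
The paper does not prove this theorem at all: it simply states the result and attributes it to Theorem~1 of \cite{lostaglio2021continuous}. In that sense your proposal is already more detailed than what the paper offers, and your forward direction is correct and exactly the natural argument---the trajectory $r(s)=\mathcal{E}_{(s,0)}(\rho)$ furnished by Def.~\ref{def_MTO} is the required continuous thermomajorization witness because each propagator $\mathcal{E}_{(s_2,s_1)}$ is itself a thermal operation.

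For the converse your sketch is in the right spirit but slightly mischaracterizes the mechanism in \cite{lostaglio2021continuous}. The argument there is not a discretization-and-limit along the given path $r(s)$; rather, it is a \emph{finite} structural decomposition: one shows that $\rho\cthermomaj\sigma$ is equivalent to the existence of a finite sequence of \emph{partial two-level thermalizations} (the elementary $\beta$-swaps/$T$-transforms) connecting $\rho$ to $\sigma$. Each such elementary step is manifestly embeddable, being generated by a single detailed-balanced two-level rate matrix, and a finite composition of embeddable Gibbs-stochastic maps is by definition an element of $\mathsf{MGS}_n$, hence of $\mathsf{MTO}_n$ on quasiclassical states. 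This is also the form of the result the paper actually uses downstream (see the proof of Theorem~\ref{thm_optmarkov}, which invokes the decomposition into at most $n!-1$ full two-level thermalizations). Your ``tangent directions'' remark touches the right generators, but the convergence-of-discretizations step you outline is neither needed nor what the cited reference does; the decomposition is combinatorial, not analytic.
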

\noindent
which is included in Theorem 1 in \cite{lostaglio2021continuous}. Using the theorem above, the optimal Markovian yield can then be computed analytically by making use of some properties of thermomajorization curves and two-levels full thermalizations \cite{perry2018sufficient}. Here we only present the statement of the theorem, while the (rather technical) proof will be presented in Appendix \ref{appendix_markov_gap}.

\begin{theorem}[Optimal Markovian yield]
The optimal yield $\gamma^*$ is not achievable under Markovian thermal operations. Furthermore, the optimal Markovian yield is given by
\begin{equation}
    \gamma_\mathcal{M} = \begin{cases}\Big[ q + (1-q)\frac{e^{-\Delta}}{1+e^{-\Delta}}\Big]\frac{e^{-\Delta}}{e^{-\Delta}+e^{-W}} & \text{if } q\geq\tilde{q}\,, \\[1ex] \Big[ 1 - q\frac{e^{-W}}{e^{-W}+e^{-\Delta}}\Big]\frac{e^{-\Delta}}{1+e^{-\Delta}} & \text{if } q<\tilde{q}\,.\end{cases}
\end{equation}
\label{thm_markovyield}
\end{theorem}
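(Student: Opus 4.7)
\textit{Proof plan.---}By Theorem~1, computing $\gamma_\mathcal{M}$ reduces to maximizing $p_\Delta(\sigma)$ over quasiclassical $\sigma$ with $\rho_i\cthermomaj\sigma$, where $\rho_i=(1-q,0,q)$. Following Perry \emph{et al.}~\cite{perry2018sufficient}, any state reachable from $\rho_i$ by continuous thermomajorization can be obtained, to arbitrary precision, as a finite composition of two-level full thermalizations. For three levels these elementary moves are $T_{0\Delta}$, $T_{0W}$, and $T_{\Delta W}$, each collapsing the marginal on its pair to the mutual Gibbs ratio. The problem then becomes a finite-dimensional combinatorial optimization over words in the alphabet $\{T_{0\Delta},T_{0W},T_{\Delta W}\}$ applied to $\rho_i$.

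The core of the argument is to show that only words of length at most two need to be considered. Each $T_{ij}$ is idempotent on pairs already equilibrated, the move $T_{0W}$ leaves $p_\Delta$ invariant, and whenever a $T_{ij}$ involving level $\Delta$ is the terminal move the yield becomes an affine function of the joint mass on $\{i,j\}$ accumulated by the preceding moves. Tracking the maximum such joint mass admissible under $\cthermomaj$ and exploiting idempotency, one shows the supremum is attained by one of the two compositions
\begin{align*}
\gamma\bigl(T_{\Delta W}\circ T_{0\Delta}\,\rho_i\bigr)&=\Big[q+(1-q)\tfrac{e^{-\Delta}}{1+e^{-\Delta}}\Big]\tfrac{e^{-\Delta}}{e^{-\Delta}+e^{-W}},\\
\gamma\bigl(T_{0\Delta}\circ T_{\Delta W}\,\rho_i\bigr)&=\Big[1-q\tfrac{e^{-W}}{e^{-W}+e^{-\Delta}}\Big]\tfrac{e^{-\Delta}}{1+e^{-\Delta}}.
\end{align*}
A direct comparison shows the first expression dominates for $q\geq\tilde{q}=1/(1+e^{W})$ and the second for $q<\tilde{q}$, the two coinciding at the crossover. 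This is precisely the threshold at which the $\beta$-order of $\rho_i$ flips (whether $1-q$ exceeds $q\,e^{W}$), providing a transparent interpretation: the optimal first thermalization is always the pair whose rescaled populations are already closer to the mutual Gibbs ratio, so that the second thermalization can redistribute as much mass as possible onto level $\Delta$.

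The strict separation $\gamma_\mathcal{M}<\gamma^*$ then follows by algebra, since in each regime the closed form above differs from the corresponding branch of $\gamma^*$ derived earlier by a manifestly positive amount whenever $0<\Delta<W$ and $q\in(0,1)$. The main obstacle I anticipate is the reduction to length-two words: in principle one could interleave $T_{0W}$ operations or re-apply some $T_{ij}$ after an intermediate move, and ruling these out requires careful monotonicity bookkeeping of the joint pair-mass along the sequence, together with the observation that any move that pushes the state onto a Gibbs line $\{p_i/e^{-\beta E_i}=p_j/e^{-\beta E_j}\}$ constrains all subsequent joint masses involving $\{i,j\}$. This is where the full strength of continuous (rather than single-step) thermomajorization enters, and it is the analogue in our setting of the infinitesimal-step analyses of \cite{perry2018sufficient,lostaglio2021continuous}.
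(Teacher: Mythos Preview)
Your proposal is essentially correct and follows the same overall architecture as the paper's proof: reduce $\mathsf{MTO}$-reachability to continuous thermomajorization, then to finite sequences of two-level full thermalizations, identify the two compositions $T_{\Delta W}\circ T_{0\Delta}$ and $T_{0\Delta}\circ T_{\Delta W}$ as the candidates, and compare them to read off the crossover at $q=\tilde q$.

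The one place where the paper differs from your plan is precisely the step you flag as the main obstacle. Rather than arguing by direct ``monotonicity bookkeeping of the joint pair-mass,'' the paper invokes the result of \cite{lostaglio2021continuous} that at most $n!-1=5$ full thermalizations between \emph{$\beta$-adjacent} levels suffice, and then organizes the argument around $\beta$-orderings of the thermomajorization curve: the initial state has $\beta$-order $(x',y',\Delta)$, any optimal-yield final state must have $\beta$-order $(\Delta,\cdot,\cdot)$, and since adjacent-pair thermalizations alternate (by idempotency) there are only two admissible paths, of lengths three and two. This replaces your open-ended analysis of words in $\{T_{0\Delta},T_{0W},T_{\Delta W}\}$ with a finite case check, and as a bonus gives a clean geometric proof of the strict gap: each of the two mandatory thermalizations strictly lowers the thermomajorization curve on $(0,Z)$, so $\gamma_\mathcal{M}\le\tilde f(e^{-\Delta})<f(e^{-\Delta})=\gamma^*$ without any algebra. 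Your algebraic route to the strict inequality would also work, but the curve argument is both shorter and more conceptual.
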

\begin{proof}
See Appendix \ref{appendix_markov_gap}. 
\end{proof}

\begin{figure}[t]
\centering
\includegraphics[]{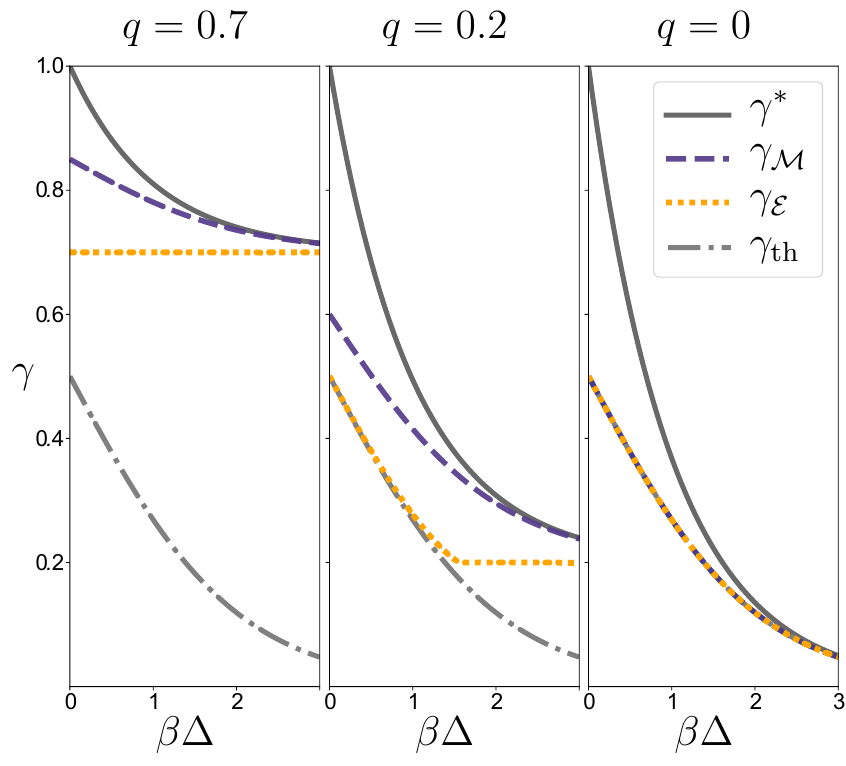}
\caption{Comparison between the optimal yield $\gamma^*$ achievable under general thermal operations, the optimal Markovian yield $\gamma_\mathcal{M}$ and the optimal embeddable yield $\gamma_\mathcal{E}$ obtained via numerical optimization, in the limit $\beta W\to\infty$. The results are shown for three different values of the initial population $q$ (see Eq.\ref{eq_state_param}), and they confirm the inaccessibility of the optimal yield $\gamma^*$ in the absence of memory. The equilibrium value for the yield is denoted by $\gamma_{\rm th}$. Note that in the considered abstract formalism, linking the role of $\beta$ to specific microscopic models is outside the scope of the analysis.}
    \label{fig:Markov_yields}
\end{figure}

The two maximum yields corresponding to \textit{Markovian} and \textit{embeddable} operations (according to Defs. \ref{def_ETO} and \ref{def_MTO}), are represented in Fig.\ref{fig:Markov_yields}, for three different values of the parameter $q$. The optimal embeddable yield obtained numerically is well described by  $\gamma_\mathcal{E} \approx \max\{q,\gamma_{\rm th}\}$, where $\gamma_{\rm th}=e^{-\Delta}/Z$ is the equilibrium value for the yield,
in agreement with the fact that it is always possible to write down a time-independent Lindblad master equation producing a yield $\gamma=q$, as well as $\gamma=\gamma_{\rm th}$.

It is interesting to notice that, when $q=0$, the maximum Markovian yield coincides with the equilibrium population of the second level, i.e. $\gamma_{\rm th}=e^{-\Delta}/Z$. This means that, when starting from an unexcited state, Markovian environments cannot do any better (in the task of optimizing the yield) than just thermalizing the state. One could then reach higher values of the yield by means of some non-Markovian interaction.\\

{\bf\emph{Discussion --}} These results rigorously demonstrate that environmental memory effects 
can significantly increase the efficiency of model bio-molecular quantum processes. 
Notably, the optimal photoisomerization yield is dramatically affected 
by the degree of memory of the environment of the photoisomer. In the
presence of a memoryless environment, which leads to a Markovian system dynamics, the photoisomerization yield is severely limited whenever $\Delta\approx \beta^{-1}$. 
With growing \textit{cis-trans} gap $\Delta$, the advantage offered by non-Markovianity decreases, such
that memory effects do not significantly alter the photoisomerization yield for $\Delta\gg\beta^{-1}$, up until the point in which $\Delta$ grows large enough to be comparable to $W$. Indeed, when $\Delta\approx W$, the optimal yield approaches a value $q$, while the Markovian yield $\gamma_\mathcal{M}$ comes very close to $q/2$. Thus, the effect of memory on the photoisomerization yield is maximal precisely when $\Delta\approx \beta^{-1}$ or $\Delta\approx W$. On the other hand, when $\Delta$ is far from both $0$ and $W$, the boost offered by memory is reduced.
These results also offer a quantitative comparison between the power of time-inhomogeneous Markovian 
processes versus their time-homogeneous counterparts. In particular, if one relaxes embeddability of 
the channels to CP-divisibility, the achievable yields grow significantly, but still not 
enough to saturate the bound $\gamma^*$, which can only be achieved by means of non-Markovian processes.
We stress that these results go beyond specific choices of environments and dynamics.
Our approach has the additional merit of providing an athermality witness in experiments. Indeed, a violation of the bound $\gamma\leq\gamma^*$, would necessarily imply that the process at hand is not a thermal operation, and therefore the environment is not merely acting as a passive thermal bath, but provides the system with athermality resources. In other words, measuring a violation of the bound would imply the presence of external batteries and/or clocks coupled to the molecule. 
On the other hand, under the assumptions of Markovianity, a violation of the bound $\gamma\leq\gamma_{\mathcal{M}}$ would allow to rigorously discard any microscopic model that is locally phase-covariant (e.g. any intermediate dynamics which does not mix populations and coherences).
Indeed, the current analysis assumes time-translation covariance of the generators of
the dynamics which in turn assumes the absence of an external clock \cite{horodecki2013fundamental,woods2019autonomous}, the addition of which would carry a thermodynamical cost, but would allow the global 
thermal operation to be underpinned by processes that are not phase covariant at every step in the evolution. The resulting 
thermal operation could however still obey an extended definition of Markovianity, in the sense of being underpinned by a divisible map composed of non-phase covariant intermediate steps \cite{Haase2019NPC}. Operations of this
type cannot be analyzed within the current approach due to the resulting mixing of populations and coherences, and it is an open problem to determine the optimal yield of this class of clock-assisted operations. 
Our results have been made possible by the mathematical rigour of the resource theoretical
approach to thermodynamics and suggest that other features such as the role of
spatial correlations in the environment can be treated in an analogous fashion, thus establishing a theoretical framework to quantitatively assess the role of temporal and spatial correlation in assisting the efficiency of dynamical processes at the nanoscale.\\

{\em Acknowledgements:} This work was supported by the European Research Council Synergy Grant HyperQ (Grant no. 856432). 
We thank Matteo Lostaglio, Kamil Korzekwa and Nicole Yunger-Halpern for their comments on the initial version of this manuscript.

\bibliographystyle{unsrt}
\bibliography{ms.bib}


\onecolumngrid
\appendix
\section{Comparison with a $4$-dimensional model}
\label{appendix_GS4}

The model used in \cite{halpern2020fundamental}, reduces a molecular switch to a four levels system. On the other hand, we have claimed that one obtains equivalent results if the Hilbert space of the system gets truncated down to a three-dimensional one. We know want to give more rigorous ground to such claim. Let us consider the full $4$-dimensional case, i.e. let $\mathsf{GS}_4$ be the set of $4\times 4$ Gibbs-stochastic matrices. If we parametrize $G\in \mathsf{GS}_4$ as

\begin{equation}
    G=\begin{pmatrix} 1-g_1e^{-\Delta}-g_2e^{-W}-g_5 e^{-W'} & g_1 & g_2 & g_5 \\ (1-g_3)e^{-\Delta} - g_4 e^{-W} - g_6 e^{-W'} & g_3 & g_4 & g_6 \\ (1-g_8) e^{-W} -g_7 e^{-\Delta} -g_9 e^{-W'} & g_7 & g_8 & g_9 \\ (g_5{+}g_6{+}g_9)e^{-W'}{-}(1{-}g_1{-}g_3{-}g_7)e^{-\Delta}{-}(1{-}g_2{-}g_4{-}g_8)e^{-W} & 1{-}g_1{-}g_3{-}g_7 & 1{-}g_2{-}g_4{-}g_8 & 1{-}g_5{-}g_6{-}g_9 \end{pmatrix}\,,
    \label{eq:generic_GS4}
\end{equation}
we see that we have now 9 parameters $g_1,\dots,g_9\in[0,1]$, further constrained in such a way to ensure the non-negativity of each entry. Clearly we regain the results that we already found in three dimensions, by setting $g_7=1-g_1-g_3$, $g_8=1-g_2-g_4$ and $g_5=g_6=g_9=0$ as such a choice allows the decomposition $G=G^{(3)}\oplus\mathbb{1}_1$, where $G^{(3)}\in \mathsf{GS}_3$, and $\mathbb{1}_1$ is the identity operator in $1$ dimension. This implies that the optimal yield in $\mathsf{GS}_4$ is lower-bounded by the optimal yield in $\mathsf{GS}_3$. In other words,
\begin{equation}
    \sup_{G\in \mathsf{GS}_4} \gamma(G) \geq \sup_{G\in \mathsf{GS}_3} \gamma(G)\,.
    \label{eq:gs4_gs3_lb}
\end{equation}
By assumption, the fourth level is never populated initially, i.e. the initial state has the form $(1-q,0,q,0)$, while after the action of $G$ the final state has a yield
\begin{equation}
    \gamma^{(4)}(g_3,g_4,g_6)= (1-q)\Big[(1-g_3)e^{-\Delta} \Big] + \Big[ q - (1-q)e^{-W}\Big] g_4 - (1-q)e^{-W'}g_6 \,.
\end{equation}
By recalling the analogous expression for the yield in the three-dimensional case (and denoting it with the symbol $\gamma^{(3)}(g_3,g_4)$), we can easily see that
\begin{equation}
    \gamma^{(4)}(g_3,g_4,g_6) \equiv \gamma^{(3)}(g_3,g_4)- (1-q)e^{-W'}g_6 \,,
\end{equation}
which is clearly less than or equal to $\gamma^{(3)}(g_3,g_4)$, for any value of $g_6$. In other words, we have the inequality
\begin{equation}
    \sup_{G\in \mathsf{GS}_4} \gamma(G) \leq \sup_{G\in \mathsf{GS}_3} \gamma(G)\,.
\end{equation}
This, together with Eq.\eqref{eq:gs4_gs3_lb}, proves that the optimal photoisomerization yield can always be optimized in $\mathsf{GS}_3$, without the need of a fourth level.

\section{Phase covariance and Markovian yield}
\label{appendix_quantum_advantage}

We have proven that no $\mathsf{MGS}_3$ process can achieve yields larger than $\gamma_\mathcal{M}$. We know that all thermal operations are in our case identified with their action on populations, i.e. they are essentially classical. However, one could then naturally ask whether a Markovian thermal operation would be able to achieve yields that classically are only possible through memory effects. If one considered general quantum operations, the answer would be yes, and in particular, as showed in \cite{korzekwa2021PRX}, it is possible to simulate classical memory via quantum Markovian operations. What this tells us, is that there exist some quantum operations that are memoryless, but induce non-embeddable Gibbs-stochastic processes on population vectors. However, this is only possible because quantum operation can in principle manipulate coherences and convert them into populations, an effect that can induce a backflow that classically would require memory (as discussed in \cite{korzekwa2021PRX}). It is then natural to ask if this still holds when we impose the additional constraint of time-translation covariance, which prevents us from accessing coherence without an external quantum clock.\\ Indeed, such an additional assumption guarantees that this effect cannot take place, and the only way to achieve yields larger then $\gamma_\mathcal{M}$ through thermal operations is to exploit memory. This is precisely the content of the following theorem.
\begin{theorem}
Any Markovian (embeddable) time-translation covariant quantum channel that preserves the Gibbs state of a non-degenerate Hamiltonian, induces a Markovian (embeddable) Gibbs-stochastic mapping on the populations in the energy eigenbasis.
\end{theorem}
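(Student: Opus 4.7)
The plan is to exploit the structure theorem for time-translation covariant Lindbladians and use the non-degeneracy hypothesis to show that the dissipator's action on diagonal entries reduces to a classical rate equation whose generator is automatically Gibbs preserving.

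First I would treat the embeddable case $\Lambda = e^{\mathcal{L}t}$. Covariance of $\Lambda$ for all $t$ implies that the generator $\mathcal{L}$ commutes with the free unitary $\mathcal{U}_s$, so by the standard classification of covariant GKSL generators one may choose a Lindblad representation in which each jump operator $L_\alpha$ is an eigenoperator of $[H,\,\cdot\,]$, with $[H,L_\alpha] = -\omega_\alpha L_\alpha$, and in which the Lamb-shift Hamiltonian $H_{\rm LS}$ commutes with $H$. Non-degeneracy of $H$ then forces $L_\alpha\ket{j}$ to be a scalar multiple of the unique basis vector $\ket{\sigma_\alpha(j)}$ with $E_{\sigma_\alpha(j)} = E_j - \omega_\alpha$ (or to vanish), so $L_\alpha = \sum_j l^\alpha_j \ket{\sigma_\alpha(j)}\bra{j}$ with $\sigma_\alpha$ injective on its support. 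Computing $\bra{i}\mathcal{L}[\rho]\ket{i}$ in this representation, the Hamiltonian commutator part vanishes and the dissipator collapses, via the injectivity of $\sigma_\alpha$, to a sum over $j=k$, yielding $\dot p_i = \sum_{\alpha,j} |l^\alpha_j|^2 \delta_{i,\sigma_\alpha(j)}\, p_j - \big(\sum_\alpha |l^\alpha_i|^2\big) p_i$. This is the rate equation $\dot p = Q p$ with non-negative off-diagonal entries $Q_{ij} = \sum_\alpha |l^\alpha_j|^2 \delta_{i,\sigma_\alpha(j)}$ and columns summing to zero, hence a genuine classical rate matrix. Gibbs preservation of $\mathcal{L}$ projects directly to $Q\,e^{-\beta E} = 0$, so the induced stochastic matrix $e^{Qt}$ lies in $\mathsf{EGS}_n$.

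For the Markovian case I would apply the same reasoning pointwise in time: a Markovian thermal operation is generated by a family $\{\mathcal{L}(s)\}$ of covariant Gibbs-preserving Lindbladians, each producing a classical Gibbs-preserving rate matrix $Q(s)$ by the argument above, and the induced dynamics on populations is the time-ordered exponential $\mathsf{T}\exp\!\left(\int_0^t Q(s)\,ds\right)$, which lies in $\mathsf{MGS}_n$ as a limit of compositions of embeddable Gibbs-stochastic matrices via a time-slicing argument analogous to the one invoked after Definition \ref{def_MTO}.

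The main obstacle is structural rather than computational: one has to verify carefully that it is non-degeneracy of $H$ (and not merely covariance) that confines each $L_\alpha$ to the partial-permutation form above, because otherwise off-diagonal blocks associated with degenerate energy subspaces would couple coherences back into the diagonal dynamics and the reduction from quantum to classical generator would fail. Once this structural point is secured, the positivity of the off-diagonals of $Q$, the Gibbs preservation of $Q$, and the extension from embeddable to Markovian are all immediate.
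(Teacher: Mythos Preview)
Your argument is correct, but it takes a genuinely different route from the paper's. The paper works entirely at the level of the channel, never unpacking the Lindblad representation: covariance of $\mathcal{E}_t$ gives $(\mathcal{E}_t\rho)^{(0)}=\mathcal{E}_t(\rho^{(0)})$, so the dephased state evolves autonomously as $\dot\rho^{(0)}=\mathcal{L}_t\rho^{(0)}$; taking diagonal matrix elements in the energy basis then \emph{defines} the classical generator as $L_{ij}(t)=\bra{i}\mathcal{L}_t(\ket{j}\bra{j})\ket{i}$, with the rate-matrix and Gibbs-preserving properties inherited from CPTP-ness and Gibbs invariance of $\mathcal{L}_t$. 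You instead invoke the structure theorem for covariant GKSL generators, use non-degeneracy to pin each jump operator down to a partial permutation $L_\alpha=\sum_j l^\alpha_j\ket{\sigma_\alpha(j)}\bra{j}$, and compute $Q_{ij}$ explicitly in terms of the amplitudes $|l^\alpha_j|^2$. This is more work, but it buys an explicit formula for the classical rate matrix and makes the non-negativity of the off-diagonals and the zero column sums manifest rather than implicit. The paper's argument is shorter and uses only the raw covariance identity; yours is more structural and exposes precisely where non-degeneracy does its job (injectivity of $\sigma_\alpha$), which is exactly the point you correctly flagged as the crux.
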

\begin{proof}
The proof starts by first showing that under a time-translation covariant evolution, the diagonal elements of the density operator decouple from its off-diagonal elements. Then it proceeds by showing that under a channel that is Markovian, defined as a possessing a description as a Lindblad equation with (possibly) time-dependent coefficients, the populations obey a classical rate equation.\\
Let us consider a time-translation covariant quantum channel $\mathcal{E}_t$, an initial state $\rho(0)$ and the resulting trajectory $\rho(t)=\mathcal{E}_t(\rho)$. We denote with $p(t)$ the vector of diagonal elements of $\rho(t)$, and write $\rho^{(0)}(t) = \text{diag}(\rho(t))$. Under time-translation covariance of $\mathcal{E}_t$, which implies $ \mathcal{E}_t^{(0)}(\rho) = \mathcal{E}_t(\rho^{(0)}) $, we find
\begin{equation} \rho^{(0)}(t+dt) - \rho^{(0)}(t) = \mathcal{E}_{t+dt}^{(0)}(\rho)-\mathcal{E}_t^{(0)}(\rho) = \mathcal{E}_{t+dt}(\rho^{(0)})-\mathcal{E}_t(\rho^{(0)}) = \Big[\mathcal{E}_{t+dt}-\mathcal{E}_{t}\Big](\rho^{(0)})\,,\end{equation}
which, under the assumption of Markovianity becomes
\begin{equation} \dot{\rho}^{(0)}(t) = \mathcal{L}_t \rho^{(0)}(t)\,.\end{equation}
By writing this equation in the energy eigenbasis $\{\ket{i}\bra{j}\}$, and exploiting the fact that the Hamiltonian is non-degenerate, we find
\begin{equation} \sum_j \dot{p}_j(t)\ket{j}\bra{j} = \sum_j p_j(t)\mathcal{L}_t(\ket{j}\bra{j})\,, \end{equation}
whose evolution equation for the $i$-th diagonal element is given by
\begin{equation} \dot{p}_i(t) = \sum_j \bra{i}\mathcal{L}_t(\ket{j}\bra{j})\ket{i} p_j(t) = \sum_j L_{ij}(t)  p_j(t)\,,\end{equation}
which implies the classical rate equation
\begin{equation}\dot{p}(t)=L(t)p(t)\,,\end{equation}
with the generator
\begin{equation} L_{ij}(t)= \bra{i}\mathcal{L}_t(\ket{j}\bra{j})\ket{i} \end{equation}
and solution
\begin{equation} p(t)=\mathsf{T}\exp\Big(\int_0^t L(\tau)d\tau\Big)\,p(0)\,. \end{equation}
Thus, the populations follow a classical Markovian process and the Gibbs-stochastic process associated to it is therefore Markovian by construction.
\end{proof}

The result above can be applied to the set of thermal operations, and in particular it guarantees that memoryless thermal operations always induce memoryless processes on population vectors, as shown in Fig. \ref{fig:MTO_EGS}.

\begin{figure}[h]
    \centering
    \includegraphics[]{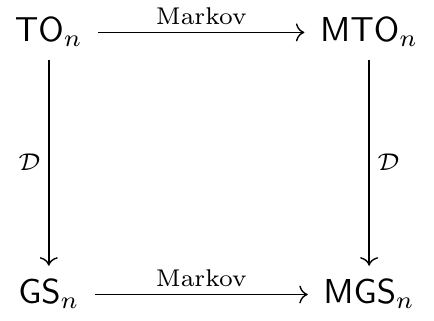}
    \caption{If $\mathcal{D}$ is a mapping that takes thermal operations to their corresponding Gibbs-stochastic processes, time-translation covariance guarantees that $\mathcal{D}$ also maps memoryless thermal operations to memoryless Gibbs-stochastic processes.}
    \label{fig:MTO_EGS}
\end{figure}

This then proves that $\mathsf{MTO}_3$ operations are identified with $\mathsf{MGS}_3$ processes on population vectors, and therefore they can at most produce a yield $\gamma_\mathcal{M}$.

\section{Optimal Markovian yield}
\label{appendix_markov_gap}
\noindent We present here the proof to Theorem \ref{thm_markovyield}.

\begin{theorem}[Optimal Markovian yield]
The optimal yield $\gamma^*$ is not achievable under Markovian thermal operations. Furthermore, the optimal Markovian yield is given by
\begin{equation}
    \gamma_\mathcal{M} = \begin{cases}\Big[ q + (1-q)\frac{e^{-\Delta}}{1+e^{-\Delta}}\Big]\frac{e^{-\Delta}}{e^{-\Delta}+e^{-W}} & \text{if } q\geq\tilde{q}\,, \\[1ex] \Big[ 1 - q\frac{e^{-W}}{e^{-W}+e^{-\Delta}}\Big]\frac{e^{-\Delta}}{1+e^{-\Delta}} & \text{if } q<\tilde{q}\,.\end{cases}
\end{equation}
\label{thm_optmarkov}
\end{theorem}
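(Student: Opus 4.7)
The plan is to combine the theorem in the excerpt (Markovian thermal operations are characterised by continuous thermomajorization, $\rho\xrightarrow{\mathsf{MTO}}\sigma\iff\rho\cthermomaj\sigma$) with the fact from \cite{perry2018sufficient,lostaglio2021continuous} that in dimension three any continuous thermomajorization trajectory can be approximated arbitrarily well by a finite sequence of two-level partial thermalizations between the pairs of energy levels $\{0,\Delta\}$, $\{\Delta,W\}$, $\{0,W\}$. This reduces the optimisation of $\gamma_{\mathcal M}$ from an infinite-dimensional search over trajectories to a combinatorial optimisation over finite sequences of elementary two-level thermalizations acting on the population vector $(p_0,p_\Delta,p_W)$.

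For the achievability half I would exhibit the optimal sequence explicitly. In the regime $q\geq \tilde q$ one first fully thermalizes $\{0,\Delta\}$, which leaves a population $(1-q)\,e^{-\Delta}/(1+e^{-\Delta})$ on $\Delta$ while keeping $q$ on $W$, and then fully thermalizes $\{\Delta,W\}$, which redistributes the total $(\Delta{+}W)$-population so that a fraction $e^{-\Delta}/(e^{-\Delta}+e^{-W})$ lands on $\Delta$; multiplying the two factors yields exactly the first branch of $\gamma_{\mathcal M}$. For $q<\tilde q$ the optimal order is reversed (first $\{\Delta,W\}$, then $\{0,\Delta\}$), and the same short calculation reproduces the second branch. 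Each of these two-step sequences is obviously in the closure of $\mathsf{MGS}_3$ since every full two-level thermalization is the $t\to\infty$ limit of an embeddable semigroup on that pair of levels.

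For the upper bound I would prove optimality via a case analysis on the three types of elementary moves: (i) a $\{0,W\}$ thermalization leaves $p_\Delta$ invariant, hence is irrelevant; (ii) a $\{0,\Delta\}$ (resp. $\{\Delta,W\}$) thermalization can raise $p_\Delta$ only when the ratio $p_\Delta/p_0$ (resp. $p_\Delta/p_W$) is below its Gibbs value, and its maximum raising effect is attained by the full thermalization. The statement then reduces to showing that after the two full thermalizations of the achievability step neither remaining pair-move is yield-increasing, so the yield is locally maximal; global optimality over arbitrary sequences will follow by tracking the thermomajorization curve segment from which $p_\Delta$ is read off and noting that this segment cannot rise under any continuous thermomajorization trajectory past the two-step configuration. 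Finally, the strict inequality $\gamma_{\mathcal M}<\gamma^*$ asserted in the theorem follows from an algebraic comparison with the closed form of $\gamma^*$ derived earlier in the manuscript.

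The main obstacle is the upper bound, specifically step (ii) of the case analysis: showing that no alternating sequence of two-level partial thermalizations can, after one or more intermediate decreases in $p_\Delta$, overshoot the two-step value. I expect the cleanest way to handle this is to argue directly on the thermomajorization curve of the evolving state, exploiting the monotone decrease of $f_{r(s)}(x)$ under continuous thermomajorization and the specific position of the breakpoint corresponding to level $\Delta$ in the $\beta$-ordering, since once that breakpoint has been pushed to the envelope saturating the two relevant Gibbs ratios no further MTO-compatible move can lift it back up.
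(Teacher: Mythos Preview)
Your achievability argument is correct and coincides exactly with the paper's: the two-step sequences you exhibit are precisely what the paper calls ``path B'' (for $q\ge\tilde q$: thermalize $\{0,\Delta\}$ then $\{\Delta,W\}$) and its symmetric counterpart for $q<\tilde q$, and your arithmetic reproduces both branches of $\gamma_{\mathcal M}$.

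The difference lies in the upper bound. The paper does not attempt a case analysis over all two-level \emph{partial} thermalizations. Instead it invokes a sharper structural result from \cite{lostaglio2021continuous}: whenever $p\cthermomaj q$, there is a sequence of at most $n!-1$ \emph{full} thermalizations, each between levels that are \emph{adjacent in the current $\beta$-ordering}, taking $p$ to a state that has the same $\beta$-ordering as $q$ and still thermomajorizes it. Since the initial $\beta$-ordering has $\Delta$ last while any state with yield near $\gamma^*$ must have $\Delta$ first, and since adjacent full thermalizations must alternate, only two ordered paths are possible; the paper simply evaluates both and reads off $\gamma_{\mathcal M}=\tilde f(e^{-\Delta})$ from the better one. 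The strict gap $\gamma_{\mathcal M}<\gamma^*$ then follows because each of the two required $\beta$-swaps strictly lowers the curve on $(0,Z)$.

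Your proposed route---local optimality after two steps plus a monotonicity argument to exclude overshooting---is not wrong in spirit, and you are right that this is where the work lies. But without the $\beta$-adjacency constraint you are optimising over an infinite family of partial-thermalization sequences, and your ``local maximum $\Rightarrow$ global maximum'' step is exactly the piece that the $\beta$-swap decomposition handles for free. In particular, your observation that $\{0,W\}$ moves leave $p_\Delta$ fixed does not by itself make them irrelevant, since they reshuffle $p_0$ and $p_W$ and hence alter what later $\{0,\Delta\}$ or $\{\Delta,W\}$ moves can achieve; ruling out gains from such interleavings is precisely what the adjacency lemma accomplishes. If you import that lemma, your proof collapses to the paper's.
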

\begin{proof}
In this proof we will make use of some properties of continuous thermomajorization that can be found in \cite{lostaglio2021continuous}, in particular we will exploit the behaviour of thermomajorization curves under two-levels full thermalizations.\\
A necessary condition for a yield $\gamma$ to be achievable by $\mathsf{MTO}$ is the existence of a final state $(p_0,\gamma,p_W)$ such that \begin{equation}(1-q,0,q)\cthermomaj(p_0,\gamma,p_W)\end{equation}
But this implies that there is a set of at most $3!-1=5$ full thermalizations $G_{ij}$, between adjacent points $i,j$ on the thermomajorization curve $f(x)$ of the initial state, such that their composition brings the initial state $(1-q,0,q)$ to one that has the same $\beta$-ordering of the final state and also thermomajorizes it. We denote the possible $\beta$-orderings by permutations of the triplet $(0,\Delta,W)$. Now, the maximum yield $\gamma^*$ can only be achieved if the $\beta$-ordering of the final state is of the form $(\Delta,x,y)$, i.e. $\gamma^*$ is the largest $\beta$-ordered probability. This also implies that $\gamma^*=f(e^{-\Delta})$. On the other hand, the initial $\beta$-ordering is of the form $(x',y',\Delta)$. Since full thermalizations have to be between adjacent points, and since in a triplet $(a,b,c)$ there are only two ways to choose adjacent points (namely $(a,b)$ and $(b,c)$), they necessarily have to alternate (picking the same pair twice in a row has no effect since full thermalizations are idempodent). Therefore, in order to transform the initial $\beta$-ordering into one that has $\Delta$ as first element, we only have the two following possibilities, differing only from which pair is chosen for the first thermalization:\\
\textbf{A)} $(x',y',\Delta)\mapsto(y',x',\Delta)\mapsto(y',\Delta,x')\mapsto(\Delta,y',x')$\\
\textbf{B)} $(x',y',\Delta)\mapsto(x',\Delta,y')\mapsto(\Delta,x',y')$\\
We can conclude that, in order to obtain the $\beta$-ordering of the final state, at least two full thermalizations are needed: one between levels $y'$ and $\Delta$, and the other between levels $x'$ and $\Delta$. This means however that all segments forming $f(x)$ will be lowered by a finite amount, producing a new curve $\tilde{f}(x)$ such that $\tilde{f}(x)<f(x)$ $\forall x\in(0,Z)$ (see Figure \ref{fig:gap_proof} for visual reference). In particular, after the two full thermalizations, no curve $g(x)$ that is below $\tilde{f}(x)$ can have $g(e^{-\Delta})=\gamma^*$. Indeed, $\gamma_\mathcal{M}\leq \tilde{f}(e^{-\Delta})< f(e^{-\Delta})$, and therefore there is a finite difference between $\gamma_\mathcal{M}$ and $\gamma^*$.\\
We now find an exact expression for the optimal Markovian yield, and we focus on the case $q\geq\tilde{q}$ only, which is the one that is physically more relevant. The case $q<\tilde{q}$ is analogous.\\
If we want to find the optimal Markovian yield exactly, we need to check both paths \textbf{A)} and \textbf{B)} above, and select the one that gives the highest yield. It turns out that the best path is the one with only two thermalizations, i.e. path \textbf{B)}. We start from the thermomajorization curve of the initial state (for the ordering $(W,0,\Delta)$), which has the form
\begin{equation}
f(x)=\begin{cases} q e^W x & x<e^{-W}\\ q + (1-q)(x-e^{-W}) & e^{-W} \leq x \leq 1+e^{-W} \\ 1 & 1+e^{-W}< x \leq Z\,,
\end{cases}
\end{equation}
while the two full thermalizations of path B) produce the curve
\begin{equation}
\tilde{f}(x)=\begin{cases} \Big[ q + (1-q)\frac{e^{-\Delta}}{1+e^{-\Delta}}\Big]\frac{x}{e^{-\Delta}+e^{-W}} & x\leq e^{-\Delta}+e^{-W}\\[1ex]  q + (1-q)\frac{x-e^{-W}}{1+e^{-\Delta}} & e^{-\Delta}+e^{-W}< x \leq Z
\end{cases}
\end{equation}
with ordering $(\Delta,W,0)$. Now, to obtain the exact $\beta$-ordering of the final state, a third full thermalization might be needed to swap $W$ and $0$, but this would only affect the second segment, i.e. $x>e^{-\Delta}+e^{-W}$, while the optimal Markovian yield $\gamma_\mathcal{M}=\tilde{f}(e^{-\Delta})$ is computed on the first, and is therefore unaffected.
We conclude that the optimal Markovian yield is then
\begin{equation}
    \gamma_\mathcal{M} = \Big[ q + (1-q)\frac{e^{-\Delta}}{1+e^{-\Delta}}\Big]\frac{e^{-\Delta}}{e^{-\Delta}+e^{-W}}\,.
\end{equation}
\end{proof}

\begin{figure}[h]
\centering
\includegraphics[]{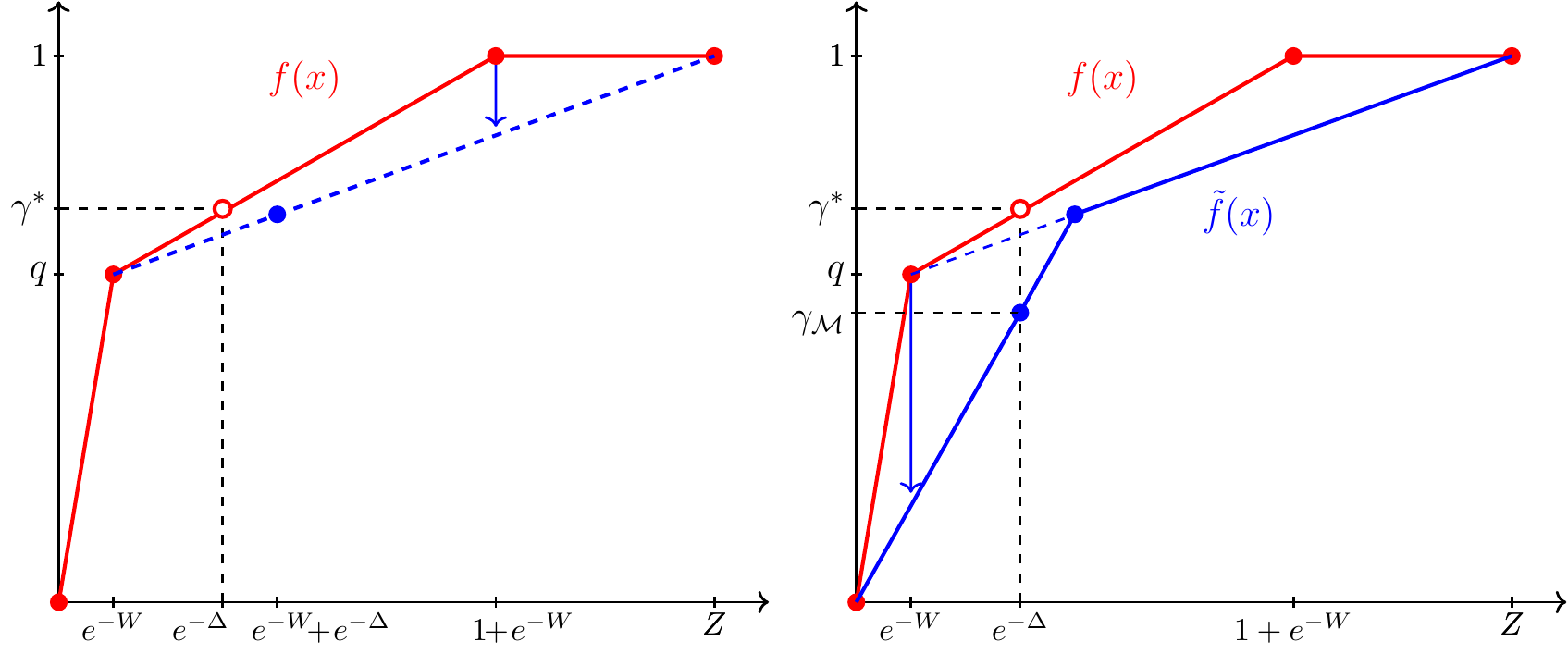}

\caption{Thermomajorization curves for the proof of Theorem \ref{thm_optmarkov}. (LEFT): The red solid curve $f(x)$ is the thermomajorization curve of the initial state $(1-q,0,q)$, with $q\geq\tilde{q}$, i.e. corresponding to the $\beta$-ordering $(W,0,\Delta)$. The maximum allowed yield is $\gamma^*=f(e^{-\Delta})$. The dashed blue curve is the result of the first full thermalization between levels $\Delta$ and $0$. The second and third segments are brought down, and the new elbow point is at $x=e^{-\Delta}+e^{-W}$. (RIGHT): After the second full thermalization the first and the second segments are brought down and the curve $\tilde{f}(x)$ is obtained. It is clear that $\gamma^*$ is no longer accessible, if the final state has to be thermomajorized by $\tilde{f}(x)$.}
\label{fig:gap_proof}
\end{figure}

\section{Optimal embeddable yield}
\label{appendix_opt}

\noindent The problem of assessing whether a $n\times n$ stochastic matrix is embeddable is considered to be fully solved for $n\leq 3$ only. Indeed, for the case $n=3$, by merging two results by Johansen \cite{johansen1974some} and Carette \cite{carette1995characterizations}, it is possible to give necessary and sufficient conditions for the embeddability of $3\times3$ Gibbs-stochastic matrices. However, since we will only be interested in the simpler case of matrices with real spectra, we hereby only state a simpler, partial characterization. If we denote the set of embeddable $n\times n$ Gibbs-stochastic matrices by $\mathsf{EGS}_n$, we have the following theorem.
\begin{theorem}[Partial Characterization of $\mathsf{EGS}_3$] Consider a matrix $G\in \mathsf{GS}_3$ with real spectrum $\sigma(G)=\{1,\lambda_1,\lambda_2\}$. Then the following statements are true: \\
\textbf{a)} If $0\in\sigma(G)$, then $G$ is not embeddable.\\ \textbf{b)} If either $\lambda_1<0$ or $\lambda_2<0$, then $G$ is embeddable only if $\lambda_1=\lambda_2$.\\ \textbf{c)} If $\lambda_1,\lambda_2>0$\, then $G$ is embeddable if and only if
\[ G_{ij} \geq  f(\lambda_1,\lambda_2)\frac{e^{-\beta E_i}}{Z} \quad \forall\, i\neq j  \,,\]
with $ f =  \frac{(\lambda_2-1)\ln\lambda_1 - (\lambda_1-1)\ln\lambda_2}{\ln\lambda_2 - \ln\lambda_1}\,. $
\label{thm:gsembed}
\end{theorem}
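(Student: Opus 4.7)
The plan is to handle the three parts separately, since (a) and (b) are short spectral observations while (c) is the only substantive computation.

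For (a), I would use the trace--determinant identity $\det e^Q = e^{\mathrm{tr}\,Q}>0$, valid for any real $Q$. Since $\det G = 1\cdot\lambda_1\lambda_2$, the presence of a zero eigenvalue forces $\det G =0$, immediately contradicting embeddability.

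For (b), I would exploit that eigenvalues of a real matrix come in complex conjugate pairs. Writing $G=e^Q$ with $Q$ real, a negative eigenvalue $\lambda_k<0$ of $G$ must come from an eigenvalue $\mu$ of $Q$ of the form $\ln|\lambda_k|+i(2m+1)\pi$, hence non-real. Then $\bar{\mu}$ is also an eigenvalue of $Q$, and $e^{\bar{\mu}}=\lambda_k$, so $\lambda_k$ occurs with algebraic multiplicity at least two in the spectrum of $G$. Because $1$ is always an eigenvalue and $G$ is $3\times 3$, this forces $\lambda_1=\lambda_2$.

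For (c), the idea is that on the spectrum $\{1,\lambda_1,\lambda_2\}\subset(0,\infty)$ the principal real branch of the logarithm is defined, and for a real matrix with distinct positive eigenvalues (hence real eigenvectors) any other logarithm would acquire non-real entries on the eigenspaces, so $Q:=\log G$ is the unique real logarithm (Culver's theorem). Thus $G\in\mathsf{EGS}_3$ if and only if $Q$ is a valid Gibbs-preserving rate matrix. The conditions $Q\mathbf{1}=0$ and $Q g=0$ (where $g_i=e^{-\beta E_i}/Z$) are automatic: $G\mathbf{1}=\mathbf{1}$ and $Gg=g$ propagate under the logarithm because $e^{tQ}$ must share these fixed vectors for all $t$. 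So the only non-trivial condition is $Q_{ij}\geq 0$ for $i\neq j$.

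To extract those off-diagonals I would use the spectral decomposition $G=P_0+\lambda_1 P_1+\lambda_2 P_2$ with $(P_0)_{ij}=g_i$, and solve the linear system that $P_0+P_1+P_2=I$ and $G$'s definition impose on the $(i,j)$-entry, namely
\[ (P_1)_{ij}+(P_2)_{ij}=-g_i,\qquad \lambda_1(P_1)_{ij}+\lambda_2(P_2)_{ij}=G_{ij}-g_i, \]
to obtain closed forms for $(P_1)_{ij}$ and $(P_2)_{ij}$. Substituting into $Q_{ij}=\ln(\lambda_1)(P_1)_{ij}+\ln(\lambda_2)(P_2)_{ij}$ and grouping terms reveals
\[ Q_{ij}=\frac{\ln\lambda_1-\ln\lambda_2}{\lambda_1-\lambda_2}\,G_{ij} \;+\; \frac{(\lambda_2-1)\ln\lambda_1-(\lambda_1-1)\ln\lambda_2}{\lambda_1-\lambda_2}\,g_i. \]
The coefficient of $G_{ij}$ is strictly positive by the mean value theorem applied to $\ln$, so $Q_{ij}\geq 0$ becomes equivalent to $G_{ij}\geq f(\lambda_1,\lambda_2)\,g_i$ after dividing through; the symmetry $f(\lambda_1,\lambda_2)=f(\lambda_2,\lambda_1)$ makes the resulting bound agnostic to which eigenvalue is labelled first.

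The main obstacle I anticipate is the uniqueness step underlying "$G$ embeddable iff $\log G$ is a rate matrix"; for the real-positive-spectrum case this reduces to Culver's classical criterion, which I would cite rather than reprove. Everything else is routine linear algebra once Sylvester's spectral decomposition is in place, and the rate-matrix algebraic constraints $Q\mathbf{1}=0,\,Qg=0$ follow automatically from the corresponding fixed-vector properties of $G$.
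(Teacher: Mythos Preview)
Your argument is correct, and it differs from the paper's proof mainly in level of self-containment. For (c) the paper simply invokes Corollary~1.2 of Johansen together with the Gibbs-stochasticity constraint, while you reconstruct that corollary from scratch via the spectral projectors $P_0,P_1,P_2$ and the principal logarithm; this is precisely how Johansen's inequality arises, so the two routes coincide at the level of content. Similarly, for (b) the paper cites the even-multiplicity lemma from Davies, whereas you supply the underlying conjugate-pair argument directly. The trade-off is that the paper's proof is a few lines of citations while yours is a page of explicit computation, but yours makes the origin of the function $f$ transparent.

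Two small caveats worth tightening. First, since stochastic matrices here act on column probability vectors, the zero-column-sum condition is $\mathbf{1}^{\!\top}Q=0$ rather than $Q\mathbf{1}=0$; your subsequent computation is unaffected because you never actually use that identity, only $Qg=0$ and the off-diagonal sign. Second, your spectral decomposition and the uniqueness-of-logarithm step via Culver's theorem assume $\lambda_1\neq\lambda_2$; the paper handles $\lambda_1=\lambda_2$ by extending $f$ by continuity, and you should note the same limiting argument (or treat the repeated-eigenvalue case separately) to cover the full statement.
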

\begin{proof}
\textbf{a)} If a matrix is embeddable, it cannot have vanishing eigenvalues, as the spectral mapping theorem forces them to be of the form $\lambda=e^{\theta}$, which is never zero. \\
\textbf{b)} Every negative eigenvalue $\lambda$ of an embeddable matrix must have even algebraic multiplicity (see Lemma 2 in \cite{davies2010embeddable} for a proof). When $n=3$, this means that if one of the two eigenvalues $\lambda_1,\lambda_2$ is negative, then the other one is also negative, and equal to the first.\\
\textbf{c)} This follows directly from the application of Johansen's result (in particular, it follows from Corollary 1.2 in \cite{johansen1974some}), and from imposing the additional constraint of Gibbs-stochasticity. The domain of $f$ is extended to $\lambda_1=\lambda_2$ by continuity.
\end{proof}

\noindent To simplify the problem, since the typical energy scales in photoisomers are such that $e^{-\beta W} \ll 1$ (with $\beta W \approx 90$ at room temperature for azobenzene), we perform the limit $W\to\infty$. In this limit, a generic Gibbs-stochastic matrix takes the form
\begin{equation}
    \tilde{G} = \begin{pmatrix} 1-g_1e^{-\Delta} & g_1 & g_2 \\ g_1e^{-\Delta} & 1-g_1 & g_4 \\ 0 & 0 & 1-g_2-g_4 \end{pmatrix}\,,
    \label{eq:generic_GS3_inf}
\end{equation}
with $g_2+g_4\leq 1$ being the only remaining constraint.

The spectrum of $\tilde{G}$ is then easily obtained as:
\begin{equation}
\begin{split}
      &\sigma(\tilde{G}) = \Big\{1\,,\,\,1-g_1(1+e^{-\Delta})\,,\,\,1-g_2-g_4\Big\}\,.  
\end{split}
\end{equation}

Theorem \ref{thm:gsembed} then provides a full characterization of embeddability for $\tilde{G}$, which results in tighter constraints on the parameters $g_1,g_2,g_4$.\\
The problem of optimizing the embeddable yield (i.e. the yield achievable under Markovian processes with time-independent generators), in the limit $W\to\infty$, can be formulated as
\begin{equation}
\begin{aligned}
\max_{g} \qquad & \gamma(g_1,g_4)= (1-q)e^{-\Delta}g_1+q g_4\\
\textrm{s.t.} \qquad &  f(g_1,g_2,g_4)\leq (1+e^{-\Delta})g_1 \leq 1\\
  & (1+e^{-\Delta})g_2 \geq f(g_1,g_2,g_4)    \\
  & (1+e^{-\Delta})g_4 \geq f(g_1,g_2,g_4)e^{-\Delta}    \\
  & g_2+g_4 \leq 1    \\
\end{aligned}
\end{equation}
where
\begin{equation}
    f(g_1,g_2,g_4) = \frac{g_1(1+e^{-\Delta})\ln(1-g_2-g_4)-(g_2+g_4)\ln(1-g_1(1+e^{-\Delta}))}{\ln(1-g_2-g_4)-\ln(1-g_1(1+e^{-\Delta}))}\,.
\end{equation}
Despite the objective function being linear, the problem is extremely hard to handle, given the highly non-linear nature of the constraints. However, we can simplify its form by performing the change of variables
\begin{equation}
    \begin{cases}
    k_1 = (1+e^{-\Delta})g_1\,, \quad & 0\leq k_1 < 1\,, \\
    k_2 = g_2+g_4 \,, \quad  & 0\leq k_2 < 1\,,\\
    k_3 = g_2-g_4 \,, \quad  & -1\leq k_3 \leq 1\,,
    \end{cases}
\end{equation}
in terms of which the problem becomes
\begin{equation}
\begin{aligned}
\max_{k} \qquad & \gamma(k_1,k_2,k_3)= (1-q)\gamma_{\rm th}k_1+\frac{1}{2}q (k_2 - k_3 ) \\
\textrm{s.t.} \qquad & k_1\geq f(k_1,k_2)\\
  & k_2+k_3 \geq 2f(k_1,k_2)/Z  \\
  & k_2-k_3 \geq 2f(k_1,k_2) e^{-\Delta}/Z   \\
\end{aligned}
\end{equation}
where
\begin{equation}
    f(k_1,k_2) = \frac{k_1\ln(1-k_2)-k_2\ln(1-k_1)}{\ln(1-k_2)-\ln(1-k_1)}\,.
\end{equation}
It is now easier to see that the first constraint is trivially satisfied for all values of $k_1$ and $k_2$, leaving us with the following 
\begin{lemma}
The optimal photoisomerization yield achievable under $\mathsf{EGS}_3$ operations, in the limit $\beta W\to\infty$, is the solution to the optimization problem 
\begin{equation}
\begin{aligned}
\min_{k} \qquad & -\gamma(k_1,k_2,k_3)\\
\textrm{s.t.} \qquad   & 2f(k_1,k_2)/Z - k_2 - k_3 \leq 0  \\
  & 2f(k_1,k_2)e^{-\Delta}/Z - k_2 + k_3 \leq 0
\end{aligned}
\end{equation}
where $k_1\in[0,1)$, $k_2\in[0,1)$, $k_3\in[-1,1]$, and $\gamma(k_1,k_2,k_3)=(1-q)\gamma_{\rm th}k_1+\frac{1}{2}q (k_2 - k_3 )$.
\end{lemma}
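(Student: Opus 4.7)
The plan is to perform the announced change of variables explicitly, rewrite the yield and all constraints in the new coordinates, and then show that the first of the three non-trivial constraints inherited from Theorem~\ref{thm:gsembed} is automatically satisfied on the entire feasible domain, so that only the two inequalities listed in the lemma remain.

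First I would invert the substitution, obtaining $g_1=k_1/Z$, $g_2=(k_2+k_3)/2$ and $g_4=(k_2-k_3)/2$, and verify that the original ranges $g_1,g_2,g_4\in[0,1]$ together with $g_2+g_4\leq 1$ convert into the stated domains $k_1,k_2\in[0,1)$ and $k_3\in[-1,1]$. A direct substitution then shows that the yield becomes $\gamma=(1-q)\gamma_{\rm th}k_1+\tfrac{1}{2}q(k_2-k_3)$ with $\gamma_{\rm th}=e^{-\Delta}/Z$, and that $f(g_1,g_2,g_4)$ depends only on the two combinations $k_1$ and $k_2$, reducing to the expression stated in the lemma. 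In parallel, the two embeddability constraints on the $(2,1)$ and $(2,3)$ off-diagonal entries translate into the two non-trivial inequalities $2f(k_1,k_2)/Z-k_2-k_3\leq 0$ and $2f(k_1,k_2)e^{-\Delta}/Z-k_2+k_3\leq 0$, while the upper bound $k_1\leq 1$ is absorbed into the declared domain of $k_1$.

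The non-routine step is to show that the remaining constraint $f(k_1,k_2)\leq k_1$ is satisfied automatically for all $k_1,k_2\in[0,1)$. My plan is to introduce the logarithmic coordinates $u=\ln(1-k_1)$ and $v=\ln(1-k_2)$, both strictly negative, and compute directly
\begin{equation}
f(k_1,k_2)-k_1=\frac{k_1 v-k_2 u}{v-u}-k_1=\frac{u(k_1-k_2)}{v-u}\,.
\end{equation}
Since $u<0$, the inequality $f\leq k_1$ reduces to $(k_1-k_2)(v-u)\geq 0$, and a short sign analysis using $v-u=\ln\frac{1-k_2}{1-k_1}$ shows that these two factors always share the same sign. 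The boundary cases $k_1=k_2$, $k_1=0$ or $k_2=0$ are handled by the continuous extension of $f$.

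With this constraint eliminated as redundant, rewriting the maximization of $\gamma$ as the minimization of $-\gamma$ produces exactly the optimization problem of the lemma. The main obstacle in the argument is precisely the algebraic identity for $f-k_1$: since $f$ is not monotone in either argument in a simple way, a direct comparison is delicate, and the reduction to a sign check in the $(u,v)$ variables is what makes the inequality transparent and completes the proof.
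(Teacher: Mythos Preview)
Your proposal is correct and follows essentially the same route as the paper's derivation in Appendix~\ref{appendix_opt}: carry out the change of variables $(g_1,g_2,g_4)\mapsto(k_1,k_2,k_3)$, rewrite the yield and the embeddability constraints, and then discard the constraint $k_1\geq f(k_1,k_2)$ as redundant. You actually go beyond the paper by supplying a genuine proof of that redundancy via the identity $f-k_1=u(k_1-k_2)/(v-u)$ (the paper merely asserts it is ``trivially satisfied''); the only slip is a labeling one, since the two surviving inequalities come from the off-diagonal entries $G_{13}=g_2$ and $G_{23}=g_4$, not from $(2,1)$ and $(2,3)$.
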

As already discussed, the non-linear nature of the constraints prevents the problem from being solved exactly, and therefore the solution provided in the main text is obtained by numerical optimization.\\

\end{document}